\newtheorem{Thm}{Theorem}[section]
\theoremstyle{definition}
\newtheorem{Theorem}[Thm]{Theorem}
\newtheorem{Lemma}[Thm]{Lemma}
\theoremstyle{remark}
\font\ym=msbm10
\newcommand{\R}{\text{\ym R}}
\newcommand{\C}{\text{\ym C}}
\newcommand{\sS}{\mathscr S}
\title[]
{Geometry of Coherent States of CCR Algebras}
\author[Shigeru Yamagami]{}%{Yamagami Shigeru}
\begin{document}
\maketitle   
\begin{center}
YAMAGAMI Shigeru\footnote{Partially supported by KAKENHI(22540217)}
\end{center}
\begin{center}
Graduate School of Mathematics
\end{center}
\begin{center}
Nagoya University 
\end{center}
\begin{center} 
Nagoya, 464-8602, JAPAN 
\end{center}    
\begin{center}
\email{yamagami@math.nagoya-u.ac.jp}
\end{center}
\subjclass[2000]{46L60, 46L51}
% Partially supported by Grant-in-Aid, number--- and 
% ESI progrm, Operator Algebras and Conformal Field Theory
\begin{abstract}
Geometric positions of square roots of coherent states 
of CCR algebras are investigated 
along with an explicit formula for transition amplitudes among them, 
which is a natural extension of our previous results on quasifree states 
and will provide a new insight into quasi-equivalence problems of quasifree states. 
\end{abstract}

% \section*{Acknowledgements}

\section*{Introduction}
% (see Theorem~\ref{finite} and Theorem~\ref{infinite} below) 
Coherent states are most extensively studied in the case 
of finite degree of freedom but most of its formality 
works in the infinite dimensional case as well, 
which is especially useful in extracting behaviour of classical 
fields from quantum ones. 

We here study coherent states in the form of 
shifted quasifree states as a continuation of 
our previous article \cite{GQFS}. 

A quasifree state $\varphi_S$ is a special state of 
a CCR C*-algebra parametrized by its covariance 
form $S$. Let $\mu_S$ be the Gaussian measure of covariance 
form given by $A = (S^{1/2} + {\overline S}^{1/2})^2/2$. 

Let $\alpha$ be a linear function to be used to 
shift the measure $\mu_S$ with the shifted measure 
denoted by $\mu_{S,\alpha}$, which is the Gaussian measure 
of covariance form $A$ and the mean functional $\alpha$. 
The shift operation is also applicable to quasifree states 
with the shifted state denoted by $\varphi_{S,\alpha}$. 
Our main result is then an explicit formula 
for the transition amplitude 
$(\varphi_{S,\alpha}^{1/2}|\varphi_{T,\beta}^{1/2})$ 
between square roots of coherent states 
$\varphi_{S,\alpha}$ and $\varphi_{T,\beta}$, which is in turn 
equal to the Hellinger integral of $\mu_{S,\alpha}$ and 
$\mu_{T,\beta}$. Thus the geometric position of 
vectors $\{ \varphi_{S,\alpha}^{1/2} \}$ is exactly 
that of classical Gaussian measures $\{ \mu_{S,\alpha} \}$ in the $L^2$-space. 

\section{Preliminaries}
Let $C$ be a C*-algebra. 
Given states $\varphi, \psi$ of $C$, 
let $\varphi^{1/2}$ and $\psi^{1/2}$ 
be their GNS-vectors in the universal representation 
space $L^2(C^{**})$ of $C$, with their inner product 
$(\varphi^{1/2}|\psi^{1/2})$ referred to as the 
transition amplitude. 

By a presymplectic vector space, we shall mean 
a pair $(V,\sigma)$ of a real vector space $V$ and 
an alternating form $\sigma$ on $V$. 
Given a presymplectic vector space $(V,\sigma)$, 
the CCR C*-algebra, denoted by $C^*(V,\sigma)$, 
is the universal C*-algebra generated by symbols 
$\{ e^{ix} \}_{x \in V}$ under the Weyl form of 
CCR's 
\[
e^{ix} e^{iy} = e^{-i\sigma(x,y)/2} e^{i(x+y)}, 
\quad 
(e^{ix})^* = e^{-ix}, 
\quad x, y \in V. 
\]

A positive (sesquilinear) form $S$ on the complexified 
vector space $V^\C$ is called a covariance form on 
$(V,\sigma)$ if 
\[
S(x,y) - {\overline S}(x,y) = i\sigma(x,y) 
\quad
\text{for $x, y \in V^\C$.}
\]
Here ${\overline S}(x,y) = \overline{S( \overline y, 
\overline x)}$. 

Each covariance form $S$ in turn gives rise to a special state 
$\varphi_S$ of the CCR C*-algebra $C^*(V,\sigma)$, 
the quasifree state of $C^*(V,\sigma)$ associated to $S$, by 
\[
\varphi_S(e^{ix}) = e^{-S(x,x)/2}, 
\quad 
x \in V. 
\]

Now the result in \cite{GQFS} is summarized as follows: 
Let $S$ and $T$ be covariance forms on a presymplectic 
vector space $(V,\sigma)$ with 
the associated quasifree states denoted by 
$\varphi_S$ and $\varphi_T$ respectively. 
Then 
\[
(\varphi_S^{1/2}|\varphi_T^{1/2}) 
= \sqrt{ 
\det\left(
\frac{2\sqrt{AB}}{A+B}
\right)}. 
\]
Here positive forms $A, B$ on $V^\C$ are defined by 
\[
A = \frac{1}{2} (S^{1/2} + {\overline S}^{1/2})^2, 
\quad 
B = \frac{1}{2} (T^{1/2} + {\overline T}^{1/2})^2
\]
with the functional calculus on positive forms 
performed in the Pusz-Woronowicz' sense.
We also use the ratio notation $\frac{F}{G}$ 
for two positive forms $F$, $G$ on a complex vector space $K$ 
to stand for an operator satisfying 
$F(x,y) = G(x, \frac{F}{G}y)$ ($x, y \in K$)  
if $F$ is dominated by $G$. Thus 
$\displaystyle \det\left( \frac{2\sqrt{AB}}{A+B} \right)$ 
is the determinant of the operator 
$\displaystyle \frac{2\sqrt{AB}}{A+B}$ fulfilling 
\[
2\sqrt{AB}(x,y) = (A+B)(x, \frac{2\sqrt{AB}}{A+B}y)
\quad 
\text{for $x, y \in V^\C$.}
\]

Related to the above transition amplitude formula, it is 
natural to introduce the following notion: 
Two covariance forms $S, T$ are said to be 
\textbf{HS-equivalent} if the associated $A$ and $B$ are 
equivalent 
(i.e., dominated by each other) with their difference 
represented by a Hilbert-Schmidt class operator 
relative to the hilbertian topology induced from $A$ or $B$.
With this terminology, we have 
$0 \leq \det\left( \frac{2\sqrt{AB}}{A+B} \right) \leq 1$, 
which is strictly positive if and only if 
$S$ and $T$ are HS-equivalent. 

\section{Coherent States}

Let $(V,\sigma)$ be a presymplectic vector space. 
Given a linear functional $\lambda: V \to \R$, 
the replacement $v \mapsto v + \lambda(v)1$ ($v \in V$) 
preserves the CCR's, whence it induces 
a *-automorphism of the CCR C*-algebra $C^*(V,\sigma)$ 
by 
\[
\Phi(e^{ix}) = e^{i\lambda(x)} e^{ix}\quad 
\text{for $x \in V$,}
\]
which, in fact, gives rise to an automorphic action of 
the additive group $V^*$, the algebraic dual of $V$, on 
$C^*(V,\sigma)$. 

If the automorphism $\Phi$ is applied to 
a quasifree state $\varphi_S$, 
then we obtain a \textbf{coherent state} (a quasifree state 
with mean) $\varphi_{S,\lambda}$: $\varphi_{S,\lambda}$ is 
a state of $C^*(V,\sigma)$ specified by 
\[
\varphi_{S,\lambda}(e^{ix}) = \varphi_S(\Phi(e^{ix})) 
= e^{i\lambda(x)} e^{-S(x)/2}, 
\quad 
x \in V. 
\]

\section{Finite-Dimensional Analysis} 
Here 
we shall establish 
a transition amplitude formula between coherent states 
under the assumption that $V$ is finite-dimensional.  

For the moment, we work with a covariance form $S$ such that 
$S + \overline S$ is non-degenerate.
Let 
$\rho_{S,\lambda}$ be the density operator associated to 
the coherent state $\varphi_{S,\lambda}$, 
which is an element in 
the Hilbert algebra $\sS(V,\sigma)$ of 
rapidly descreasing functions on $V$ introduced in 
\cite[\S 3]{GQFS}. 
%Here is a more structural explanation on this formula: 
The gauge automorphism $\Phi$ of $C^*(V,\sigma)$ is 
restricted to  
an automorphism of $\sS(V,\sigma)$ 
so that 
\[
\Phi\left(
\int_V f(x) e^{ix}\,dx 
\right) 
= \int_V f(x) \Phi(e^{ix})\,dx
= \int_V e^{i\lambda(x)} f(x) e^{ix}\,dx, 
\]
namely $(\Phi f)(x) = e^{i\lambda(x)} f(x)$ 
for $f \in \sS(V,\sigma)$. 
Now  
%\begin{align*}
\[
\varphi_{S,\lambda}\left( 
\int_V f(x)e^{ix}\,dx 
\right)
= \varphi_S\left( 
\int_V (\Phi f)(x) e^{ix}\,dx
\right)
= \tau(\rho_S * (\Phi f)) 
= \tau((\Phi^{-1}\rho_S)*f)
%\end{align*}
\]
shows that $\rho_{S,\lambda} = \Phi^{-1}\rho_S$ and then 
$\rho_{S,\lambda}^{1/2} = \Phi^{-1} \rho_S^{1/2}$; 
\[
\rho_{S,\lambda}^{1/2}(x) 
= \frac{1}{\sqrt{N_S}} 
\exp\left( 
- \frac{1}{4} (S^{1/2} + {\overline S}^{1/2})^2(x) 
- i\lambda(x) 
\right) 
\]
with 
\[
N_S = \int_V e^{-(S^{1/2} + {\overline S}^{1/2})^2(x,x)/2}\, 
dx.
\]
%\rho{S,\lambda}^{1/2}$
% whence the following fromula is obtained. 
% the above formula is reproduced again. 

Let $T$ be another covariance form on $(V,\sigma)$ such that 
$T + \overline T$ is non-degenerate. 
Then, by using the density operator expression,  
\begin{align*}
(\varphi_T^{1/2}|\varphi_{S,\lambda}^{1/2}) 
&= \tau(\rho_T^{1/2}*\rho_{S,\lambda}^{1/2}) 
= \int_V \rho_T^{1/2}(-x) \rho_{S,\lambda}^{1/2}(x)\\ 
&= \frac{1}{\sqrt{N_SN_T}} 
\int_V e^{-(A(x)+B(x))/2 + i\lambda(x)}\, dx\\ 
&= \sqrt{\det\left( 
\frac{2\sqrt{AB}}{A+B} 
\right)} 
e^{-(A+B)^{-1}(\lambda)/2}. 
\end{align*}
% under the assumption that 
% $S+\overline S$ and $T+\overline T$ are non-degenerate. 

Recall here that, 
given a positive sesquilinear form $Q$ of a vector 
space $K$, the inverse form $Q^{-1}$ 
(which is a quadratic form on the algebraic dual $K^*$ 
taking values in $[0,+\infty]$) 
is defined as follows: Let $\alpha: K \to \C$. 
If we can find $a \in K_Q$ (the completion of 
$K$ relative to $Q$) satisfying 
$\alpha(x) = Q(a,x)$ for $x \in K$, then $Q^{-1}(\alpha)$ 
is set to be $Q(a,a)$ and otherwise 
$Q^{-1}(\alpha) = +\infty$. 
% Note also that $Q^{-1}$ satisfies the parallelogram law. 

We shall now remove the non-degeneracy assumption on 
$(\ ,\ )_S = S + \overline S$ and 
$(\ ,\ )_T = T + \overline T$. 

If there is an $x \in V$ such that $(x,x)_T = 0$ 
and $(x,x)_S \not= 0$, then the restrictions 
of $\varphi_T$ and $\varphi_{S,\lambda}$ to $C^*(\R x,0)$ 
are given by a Dirac measure $\delta$ and 
a gaussian measure respectively, whence they are disjoint 
and the determinant formula remains valid. 

%Therefore
Henceforce suppose that $S + \overline S$ and $T + \overline T$ 
are equivalent as positive forms. 
If we can find an $x \in V$ such that 
$(x,x)_S = (x,x)_T = 0$ and $\lambda(x) \not= 0$, then 
the above arugument is applied again to get 
Dirac measures with disjoint supports, 
showing the orthogonality of 
$\varphi_T^{1/2}$ and $\varphi_{S,\lambda}^{1/2}$. 

Finally, consider the case that 
$\lambda(x) = 0$ if $x \in V$ satisfies $(x,x)_S = 0$ 
(or equivalently $(x,x)_T = 0$). Let $V'$ be the 
quotient of $V$ by the kernel of $S+\overline S$ 
with $\lambda'$ the quotient of $\lambda$. 
The quotient map $\phi: V \to V'$ satisfies 
$\lambda = \lambda'\circ \phi$ and then 
it induces a *-homomorphism 
$\pi: C^*(V,\sigma) \to C^*(V',\sigma')$ so that 
$\varphi_{S,\lambda} = \varphi_{S',\lambda'}\circ \pi$ and 
$\varphi_T = \varphi_{T'}\circ \pi$. 
Thus the GNS-representation of $\varphi_{S,\lambda}$ is 
weakly approximated ($\pi$ being an epimorphism) 
by that of $\varphi_{S',\lambda'}$ and 
\cite[Proposition~4.3]{GQFS} is used to see 
\begin{align*}
(\varphi_T^{1/2}| \varphi_{S,\lambda}^{1/2}) 
&= (\varphi_{T'}^{1/2} | \varphi_{S',\lambda'}^{1/2}) 
= \sqrt{\det\left( 
\frac{2\sqrt{A'B'}}{A'+B'} 
\right)} 
e^{-(A'+B')^{-1}(\lambda)/2}\\
&= \sqrt{\det\left( 
\frac{2\sqrt{AB}}{A+B} 
\right)} 
e^{-(A+B)^{-1}(\lambda)/2}
\end{align*}
without assuming the non-degeneracy of 
$S + \overline S$ and $T + \overline T$. 
Note that this particularly implies the inequality
\[
(\varphi_T^{1/2}| \varphi_{S,\lambda}^{1/2}) 
\leq (\varphi_T^{1/2}| \varphi_S^{1/2}).
\]

\section{Infinite-Dimensional Analysis}

Let $S$, $T$ be covariance forms an infinite-dimensional 
presymplectic vector space $(V,\sigma)$ and 
$\lambda: V \to \R$ be a linear functional. 
% We say that two positive sesquilinear forms on a complex vector space are equivalent if 
% they induce the same topology. 

\begin{Lemma}~ 
\begin{enumerate}
\item 
Unless $S+\overline S$ and $T + \overline T$ are 
equivalent, we have 
$(\varphi_{S,\lambda}^{1/2}| \varphi_T^{1/2}) = 0$. 
\item 
If $S + \overline S$ and $T + \overline T$ are equivalent 
and $\lambda \not= 0$ on 
$\ker(S+\overline S) = \ker(T+\overline T)$, then 
$(\varphi_{S,\lambda}^{1/2}| \varphi_T^{1/2}) = 0$. 
\item 
If $S + \overline S$ and $T + \overline T$ are equivalent 
and $\lambda = 0$ on 
$\ker(S+\overline S) = \ker(T+\overline T)$, then 
\[
(\varphi_{S,\lambda}^{1/2}| \varphi_T^{1/2}) = 
(\varphi_{S',\lambda'}^{1/2}| \varphi_{T'}^{1/2}). 
\]
Here $S'$, $T'$ and $\lambda'$ are induced on the completion of 
quotient space $V' = V/\ker(S+\overline S)$ with respect 
to the inner product $S + \overline S$. 
\end{enumerate}
\end{Lemma}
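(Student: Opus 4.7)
My plan is to dispose of the three parts in reverse order, invoking the finite-dimensional formula of Section~3, Proposition~4.3 of \cite{GQFS}, and the infinite-dimensional quasifree transition amplitude formula from the same paper. The common tool throughout is monotonicity of the transition amplitude under restriction to a C*-subalgebra: fidelity of two states does not decrease when they are restricted to a subalgebra.

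For part (iii), I would first observe that $\ker(S+\overline S) \subset \ker\sigma$ by Cauchy-Schwarz applied to the positive form $S$, so $\sigma$, $S$, $T$ and (by hypothesis) $\lambda$ all descend to the quotient $V/\ker(S+\overline S)$ and extend by continuity with respect to the $S+\overline S$-norm to the completion $V'$. The quotient map induces a surjective *-homomorphism $\pi: C^*(V,\sigma) \to C^*(V',\sigma')$ through which $\varphi_{S,\lambda}$ and $\varphi_T$ factor as $\varphi_{S',\lambda'}\circ\pi$ and $\varphi_{T'}\circ\pi$. Equality of the transition amplitudes is then a direct application of \cite[Proposition~4.3]{GQFS}, exactly as in the final step of the finite-dimensional analysis.

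For part (ii), I would pick $x \in \ker(S+\overline S) = \ker(T+\overline T)$ with $\lambda(x) \neq 0$ and restrict both states to the abelian C*-subalgebra $C^*(\R x, 0)$. Since $(x,x)_S = (x,x)_T = 0$, the restriction of $\varphi_T$ is the Dirac point mass $\delta_0$ while the restriction of $\varphi_{S,\lambda}$ is $\delta_{\lambda(x)}$; these have disjoint supports, so their Hellinger integral vanishes. Monotonicity of the transition amplitude under restriction to this subalgebra then forces $(\varphi_{S,\lambda}^{1/2}|\varphi_T^{1/2}) = 0$.

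Part (i) I expect to be the main obstacle, and I would handle it by finite-dimensional approximation. For each finite-dimensional subspace $W \subset V$, the inequality at the end of Section~3, combined with monotonicity of the transition amplitude under restriction, yields
\[
(\varphi_{S,\lambda}^{1/2}|\varphi_T^{1/2})
\leq (\varphi_{S,\lambda}|_W^{1/2}|\varphi_T|_W^{1/2})
\leq (\varphi_S|_W^{1/2}|\varphi_T|_W^{1/2}).
\]
Passing to the infimum over $W$ and identifying $\inf_W (\varphi_S|_W^{1/2}|\varphi_T|_W^{1/2})$ with $(\varphi_S^{1/2}|\varphi_T^{1/2}) = \sqrt{\det(2\sqrt{AB}/(A+B))}$ via the quasifree formula of \cite{GQFS}, the bound collapses to zero because the hypothesis that $S+\overline S$ and $T+\overline T$ are not equivalent forces, a fortiori, non-HS-equivalence of $A$ and $B$. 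The delicate technical point requiring care is the identification of this infimum with the infinite-dimensional quasifree amplitude, relying on density of the union of finite-dimensional Weyl subalgebras; once this is in hand the coherent-state case inherits the vanishing automatically.
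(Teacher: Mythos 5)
Your parts (ii) and (iii) are exactly the paper's argument: the author simply points back to Section 3, where the restriction to $C^*(\R x,0)$ produces $\delta_0$ versus $\delta_{\lambda(x)}$ (disjoint, hence orthogonal square roots by monotonicity of the transition amplitude under restriction), and where the quotient epimorphism $\pi$ together with \cite[Proposition~4.3]{GQFS} gives the equality of amplitudes; the passage to the completion in (iii) is then covered, as in the paper, by the approximation result of \cite{GMS}, and your phrase ``extend by continuity'' should be read with the caveat (also present in the paper's formulation) that $\lambda'$ need not be norm-continuous, the unbounded case being harmless since it only produces vanishing amplitudes later on. Where you genuinely diverge is part (i): the paper's proof is ``repeat the argument of \cite[\S 4.1]{GQFS}'', i.e.\ rerun the disjointness argument for non-equivalent symmetrized forms directly with the shifted state (the one-dimensional restrictions are shifted Gaussians, and the shift only decreases the Hellinger affinity), whereas you reduce the coherent case to the quasifree one via the chain $(\varphi_{S,\lambda}^{1/2}|\varphi_T^{1/2}) \leq (\varphi_{S,\lambda}|_W^{1/2}|\varphi_T|_W^{1/2}) \leq (\varphi_S|_W^{1/2}|\varphi_T|_W^{1/2})$, the decreasing-net identification $\inf_W(\varphi_{S|_W}^{1/2}|\varphi_{T|_W}^{1/2}) = (\varphi_S^{1/2}|\varphi_T^{1/2})$, and the determinant formula of \cite{GQFS}. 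This is correct and arguably cleaner, since it reuses theorems already available instead of redoing an argument, and as a by-product it yields the general inequality $(\varphi_{S,\lambda}^{1/2}|\varphi_T^{1/2}) \leq (\varphi_S^{1/2}|\varphi_T^{1/2})$; the price is that you must make explicit the small step you leave implicit, namely that $S+\overline S \leq (S^{1/2}+\overline S^{1/2})^2 \leq 2(S+\overline S)$ in the Pusz--Woronowicz calculus, so that $A$ is form-equivalent to $S+\overline S$ and $B$ to $T+\overline T$, whence non-equivalence of the symmetrized forms really does rule out HS-equivalence of $A$ and $B$ and forces the determinant to vanish. The paper's route, by contrast, is more self-contained at this point and does not invoke the finite-dimensional approximation machinery before it is set up in Section 4.
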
                                
                                
% Assume that 
% we can find $0 \not= x \in V^\C$ and $\epsilon>0$ satisfying 
% $(x,x)_T \not= 0$ and $(x,x)_S \leq \epsilon (x,x)_T$. Then 
% $(\varphi_{S,m}^{1/2}|\varphi_T^{1/2}) 
% \leq 2\epsilon^{1/4}$.  

\begin{proof}
(i) Repeat the argument in \cite[\S 4.1]{GQFS} for (i), whereas 
(ii) and (iii) are already discussed 
in the previous section.  
\end{proof}

To check the validity of the transition amplitude formula, 
it therefore suffices to deal with the case where 
$V$ is hilbertian with $S$ and $T$ admissible covariance forms 
(the transition amplitudes being approximated under the process of taking completion, 
see \cite[Proposition~4.3]{GMS}). 

Thanks to the decomposition into seperable subspaces 
(cf.~\cite[\S 4.3]{GQFS}), we can further assume that 
$V$ is a separable hilbertian space. 
Choose an increasing sequence $\{ V_n\}$ of 
finite-dimensional subspaces of $V$ with 
$\cup_n V_n$ dense in $V$. 
Let $S_n$, $T_n$ and $\lambda_n$ be the restrictions of $S$, $T$ 
and $\lambda_n$ to the subspace $V_n^\C$ respectively. 
Set $2A_n = (S_n^{1/2} + {\overline{S_n}}^{1/2})^2$, 
$2A = (S^{1/2} + {\overline S}^{1/2})^2$, 
$2B_n = (T_n^{1/2} + {\overline{T_n}}^{1/2})^2$, 
and $2B = (T^{1/2} + {\overline T}^{1/2})^2$ as before. 

Warning: $A_n$ and $B_n$ are \textit{not} 
necessarily restrictions of $A$ and $B$. 

\begin{Lemma}
We have 
\[
\lim_{n \to \infty} 
\det\left(
\frac{2\sqrt{A_nB_n}}{A_n+B_n}
\right)
= 
\begin{cases}
\det\left( 
\frac{2\sqrt{AB}}{A+B}
\right)
&\text{if $S$ and $T$ are HS-equivalent,}\\
0 &\text{otherwise.}
\end{cases}
\]  
\end{Lemma}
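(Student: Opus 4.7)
The strategy is to reinterpret each finite-dimensional determinant as a squared transition amplitude of quasifree states on a nested subalgebra, and then pass to the inductive limit. By the finite-dimensional transition amplitude formula established in Section~4 (taken with $\lambda = 0$),
\[
\det\!\left(\frac{2\sqrt{A_nB_n}}{A_n+B_n}\right)
= (\varphi_{S_n}^{1/2}|\varphi_{T_n}^{1/2})^2,
\]
where the transition amplitude is computed in the GNS-space of $C^*(V_n,\sigma_n)$. Since $V_n\subset V$ induces an inclusion $C^*(V_n,\sigma_n)\hookrightarrow C^*(V,\sigma)$ of CCR algebras and $S_n$, $T_n$ are the restrictions of $S$, $T$ to $V_n^\C$, the states $\varphi_{S_n}$ and $\varphi_{T_n}$ are precisely the restrictions of $\varphi_S$ and $\varphi_T$ to the subalgebra $C^*(V_n,\sigma_n)$.

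Next I would invoke the monotonicity of transition amplitudes under restriction to a C*-subalgebra: for $B\subset C$ and states $\varphi,\psi$ of $C$, one has $(\varphi|_B^{1/2}|\psi|_B^{1/2})\ge(\varphi^{1/2}|\psi^{1/2})$. Applied to the chain $C^*(V_n,\sigma_n)\subset C^*(V_{n+1},\sigma_{n+1})\subset C^*(V,\sigma)$, this forces the sequence of determinants to be monotonically non-increasing and therefore convergent. Its limit is then identified with the full transition amplitude $(\varphi_S^{1/2}|\varphi_T^{1/2})^2$ by the approximation statement \cite[Proposition~4.3]{GMS}, which exploits the density of $\bigcup_n V_n$ in the hilbertian $V$ together with the admissibility of $S$ and $T$. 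The main theorem of \cite{GQFS} recalled in Section~1 finally yields
\[
(\varphi_S^{1/2}|\varphi_T^{1/2})^2=\det\!\left(\frac{2\sqrt{AB}}{A+B}\right),
\]
a quantity which, by the HS-equivalence criterion also recalled in Section~1, is strictly positive precisely when $S$ and $T$ are HS-equivalent and vanishes otherwise, delivering the claimed dichotomy.

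The main obstacle is the convergence step itself: the determinants converge by the monotonicity argument for free, but pinning the limit to the infinite-dimensional transition amplitude rests on the approximation lemma of \cite{GMS}, whose hypotheses require admissibility of $S$, $T$ and the hilbertian topology on $V$ in an essential way. A pleasant feature of the transition-amplitude reformulation is that it sidesteps the warning immediately preceding the Lemma: only the data $S_n=S|_{V_n^\C}$ (and similarly for $T_n$) enter the definition of $\varphi_{S_n}$, so the limit is governed by $S_n\to S$ rather than by any direct operator-theoretic comparison of $A_n$ with $A|_{V_n^\C}$.
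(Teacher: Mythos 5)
Your proposal is correct and follows essentially the same route as the paper: the paper's proof likewise converts each determinant into the squared transition amplitude $(\varphi_{S_n}^{1/2}|\varphi_{T_n}^{1/2})^2$ via the quasifree determinant formula, identifies the limit with $(\varphi_S^{1/2}|\varphi_T^{1/2})^2$ by the approximation of transition amplitudes over the increasing subspaces $V_n$, and then reads off the dichotomy from the determinant formula together with the HS-equivalence criterion. Your added monotonicity observation is a harmless supplement to the same argument.
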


\begin{proof}
This follows from the determinant formula for 
the transition amplitude 
$(\varphi_S^{1/2}|\varphi_T^{1/2})$ 
along with the equality %the convergence formula 
$\lim_{n \to \infty} 
(\varphi_{S_n}^{1/2}| 
\varphi_{T_n}^{1/2}) 
= (\varphi_S^{1/2}|\varphi_T^{1/2})$. 
\end{proof}

In view of 
\[
(\varphi_{S_n,\lambda_n}^{1/2}|\varphi_{T_n}^{1/2}) 
= (\varphi_{S_n}^{1/2}|\varphi_{T_n}^{1/2}) 
e^{-(A_n+B_n)^{-1}(\lambda_n)/2},
\]
the above convergence formula gives  
\begin{align*}
(\varphi_{S,\lambda}^{1/2}|\varphi_T^{1/2}) 
&= \lim_{n \to \infty} 
(\varphi_{S_n,\lambda_n}^{1/2}|\varphi_{T_n}^{1/2})\\
&= \lim_{n \to \infty} 
(\varphi_{S_n}^{1/2}|\varphi_{T_n}^{1/2}) 
e^{-(A_n+B_n)^{-1}(\lambda_n)/2}\\ 
&= (\varphi_S^{1/2}|\varphi_T^{1/2}) 
\lim_{n \to \infty} e^{-(A_n+B_n)^{-1}(\lambda_n)/2}
\end{align*}

Particularly, we observe that 
$(\varphi_{S,\lambda}^{1/2}|\varphi_T^{1/2}) = 0$ if $S$ and $T$ are 
not HS-equivalent, and our task is reduced to showing that 
\[
\lim_{n \to \infty} (A_n+B_n)^{-1}(\lambda_n) = 
\begin{cases}
(A+B)^{-1}(\lambda) &\text{if $\lambda$ is bounded,}\\
+\infty &\text{otherwise}
\end{cases}
\]
under the assumption that $S$ and $T$ are HS-equivalent. 

To deal with this problem, 
we use $R \equiv S+\overline S + T + \overline T$ 
as a reference inner product with $R_n$ restriction 
to $V_n^\C$. 
Let $g_n: V \to V_n$ be the orthogonal projection 
with respect to $R$. 

For $x, y \in V_n^\C$, we have 
\begin{align*}
S_n(x,y) &= S(x,y) = 
R(x,\frac{S}{R}y) 
= R(x,g_n\frac{S}{R}g_ny)\\
&= (S_n + \overline{S_n} + T_n + \overline{T_n})
(x,g_n\frac{S}{R}g_ny), 
\end{align*}
which shows that 
\[
\frac{S_n}{S_n + \overline{S_n} 
+ T_n + \overline{T_n}} 
= \left. 
g_n \frac{S}{S+\overline S + T + \overline T}g_n
\right|_{V_n^\C}. 
\]
Thus we have the expression % (cf.~Appendix~D)
\begin{multline*}
\frac{(S_n^{1/2} + {\overline{S_n}}^{1/2})^2}
{S_n + \overline{S_n} + T_n + \overline{T_n}}
= \frac{S_n + \overline{S_n}}
{S_n + \overline{S_n} + T_n + \overline{T_n}}
\frac{(S_n^{1/2} + {\overline{S_n}}^{1/2})^2}
{S_n + \overline{S_n}}\\
= \left. 
g_n \frac{S + \overline S}{S + \overline S + T + 
\overline T} g_n 
\left( 
\Bigl(e_n \frac{S}{S + \overline S} e_n\Bigr)^{1/2} 
+ \Bigl(e_n \frac{\overline S}{S + \overline S} e_n\Bigr)^{1/2} 
\right)^2
\right|_{V_n^\C},   
\end{multline*}
which reveals that 
\begin{align*}
\lim_{n \to \infty} 
g_n 
&\frac{(S_n^{1/2} + {\overline{S_n}}^{1/2})^2}
{S_n + \overline{S_n} + T_n + \overline{T_n}}
g_n\\  
&\quad= \frac{S + \overline S}{S + \overline S 
+ T + \overline T}
\left(
\left( \frac{S}{S + \overline S} \right)^{1/2}
+ 
\left( \frac{\overline S}{S + \overline S} \right)^{1/2}
\right)^2\\
&\quad= 
\frac{(S^{1/2} + {\overline S}^{1/2})^2}
{S + \overline S + T + \overline T}
\end{align*} 
in strong operator topology. 
Thus, if we set 
\[
C_n = 1 - g_n + g_n 
\frac{(S_n^{1/2} + {\overline{S_n}}^{1/2})^2}
{S_n + \overline{S_n} + T_n + \overline{T_n}}
g_n 
+ g_n 
\frac{(T_n^{1/2} + {\overline{T_n}}^{1/2})^2}
{S_n + \overline{S_n} + T_n + \overline{T_n}}
g_n,  
\]
we have 
\[
\lim_{n \to \infty} C_n = C 
\equiv 
\frac{(S^{1/2} + {\overline{S}}^{1/2})^2 
+ (T^{1/2} + {\overline{T}}^{1/2})^2}
{S + \overline{S} + T + \overline{T}}
\]
again in strong operator topology. 

From the definition, the operator $C_n$ is $R$-positive. 
Since $R_n \leq 2(A_n+B_n) \leq 2R_n$, 
%(cf.~Appendix~D), 
we see $1 \leq C_n \leq 2$. 
Note also that, for $x, y \in V_n$, 
\[
R(x,C_ny) = R_n\Bigl(x,\frac{2(A_n+B_n)}{R_n}y
\Bigr) 
= 2(A_n+B_n)(x,y).
\]

Define $x_n \in V_n$ by 
\[
2(A_n+B_n)(x_n,x) 
= R(x_n,C_nx) = \lambda(x)
\quad
\text{for $x \in V_n$}
\]
and set $y_n = C_nx_n \in V_n$. 
Then 
\[
2^{-1}(A_n+B_n)^{-1}(\lambda_n) = 
2(A_n + B_n)(x_n,x_n) = 
R(x_n,C_nx_n)
\]
and the relation 
$R(y_n,x) = \lambda(x)$ for $x \in V_n$ implies that 
$g_my_n = y_m$ for $m \leq n$. 

If $\lambda$ is bounded with $y_\infty \in V$ defined by 
$R(y_\infty,x) = \lambda(x)$ for $x \in V$, 
then $y_n \to y_\infty$ in norm topology. 
Since $C_n^{-1} \to C^{-1}$ in strong operator topology, 
we see that $x_n = C_n^{-1}y_n \to C^{-1}y_\infty$ in 
norm topology as well. Thus 
%\begin{align*}
\[
\lim_{n \to \infty} 2^{-1}(A_n+B_n)^{-1}(\lambda_n) 
= \lim_{n \to \infty} R(x_n,C_nx_n) 
= R(C^{-1}y_\infty, y_\infty) 
= 2^{-1}(A+B)^{-1}(\lambda).
\]
% \end{align*}

Assume now that $\lambda$ is not bounded and we shall show that 
\[
\liminf_{n \to \infty} R(x_n,C_nx_n) = +\infty. 
\]
Otherwise, 
\[
\liminf R(x_n,x_n) \leq 
\liminf R(x_n,C_nx_n) < +\infty
\]
and we can find a subsequence $\{ x_{n'} \}$ so that 
$x_{n'}$ converges weakly to some $x_\infty \in V$. 
Then, for any $x \in V$, 
\[
R(x,C_{n'}x_{n'}) = 
R((C_{n'}-C)x,x_{n'}) + R(Cx,x_{n'})
\to R(Cx,x_\infty) = R(x,Cx_\infty)
\]
means $y_{n'} \to Cx_\infty$ weakly and therefore 
$\{ R(y_{n'},y_{n'}) \}$ is bounded by 
the Banach-Steinhauss theorem, 
which contradicts with the unboundedness of $\lambda$, 
concluding that
% Since we know the convergence of 
% the sequence $e^{-R(x_n,C_nx_n)}$ as a part of 
% transition amplitude convergence formula, we conclude that 
\[
\lim_{n \to \infty} (A_n+B_n)^{-1}(\lambda_n) = +\infty
\]
if $\lambda$ is not bounded. 

\begin{Theorem}
Let $S$, $T$ be covariance forms on 
a presymplectic vector space $(V,\sigma)$ and 
$\alpha$, $\beta$ be linear functionals of $V$. 
Then 
\[
(\varphi_{S,\alpha}^{1/2}| \varphi_{T,\beta}^{1/2}) 
= \sqrt{
\det\left( 
\frac{2\sqrt{AB}}{A+B}
\right)}
e^{-(A+B)^{-1}(\alpha-\beta)/2}, 
\]
where $2A = (S^{1/2} + {\overline S}^{1/2})^2$ and 
$2B = (T^{1/2} + {\overline T}^{1/2})^2$. 
\end{Theorem}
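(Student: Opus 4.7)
The plan is to reduce this two-shift statement to the one-shift case $\beta=0$ that is effectively proven in Section 5, by means of a gauge automorphism argument. Recall from Section 2 that for any linear $\mu:V\to\R$, the prescription $e^{ix}\mapsto e^{i\mu(x)}e^{ix}$ defines a *-automorphism $\Phi_\mu$ of $C^*(V,\sigma)$. Since any *-automorphism of a C*-algebra lifts to a unitary on the universal GNS space $L^2(C^{**})$, it preserves transition amplitudes between square roots of states: $(\varphi^{1/2}|\psi^{1/2}) = ((\varphi\circ\Phi_\mu)^{1/2}|(\psi\circ\Phi_\mu)^{1/2})$.

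I would apply $\Phi_{-\beta}$ in both slots. A direct computation on Weyl generators gives
\[
\varphi_{T,\beta}\circ\Phi_{-\beta}(e^{ix}) = e^{-i\beta(x)}e^{i\beta(x)}e^{-T(x)/2} = \varphi_T(e^{ix}),
\]
\[
\varphi_{S,\alpha}\circ\Phi_{-\beta}(e^{ix}) = e^{-i\beta(x)}e^{i\alpha(x)}e^{-S(x)/2} = \varphi_{S,\alpha-\beta}(e^{ix}),
\]
and since the Weyl generators span a dense subspace of $C^*(V,\sigma)$, these identities extend by continuity to all of $C^*(V,\sigma)$: $\varphi_{T,\beta}\circ\Phi_{-\beta}=\varphi_T$ and $\varphi_{S,\alpha}\circ\Phi_{-\beta}=\varphi_{S,\alpha-\beta}$. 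Therefore
\[
(\varphi_{S,\alpha}^{1/2}|\varphi_{T,\beta}^{1/2}) = (\varphi_{S,\alpha-\beta}^{1/2}|\varphi_T^{1/2}).
\]

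The right-hand side is precisely what was analyzed in Section 5 for the linear functional $\lambda=\alpha-\beta$: combining the opening lemma with the finite-dimensional approximation, it equals $\sqrt{\det(2\sqrt{AB}/(A+B))}\,e^{-(A+B)^{-1}(\alpha-\beta)/2}$, under the convention $e^{-\infty}=0$. The various degeneration regimes match on both sides of the desired identity: if $S$ and $T$ are not HS-equivalent, the determinant factor vanishes; if $\alpha-\beta$ fails to vanish on $\ker(S+\overline S)=\ker(T+\overline T)$ or is unbounded with respect to $A+B$, then $(A+B)^{-1}(\alpha-\beta)=+\infty$ and the exponential factor vanishes. I do not foresee a genuine obstacle: the content of the theorem is already embedded in Section 5, and the only work is the bookkeeping of the gauge automorphism reduction, which is a formal consequence of the Weyl relations.
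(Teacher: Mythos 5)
Your proposal is correct and follows essentially the same route as the paper: the hard content is the one-shift formula for $(\varphi_{S,\lambda}^{1/2}|\varphi_T^{1/2})$ established in the finite- and infinite-dimensional analysis (Sections 3--4, not ``Section 5''), and the theorem follows by exactly the gauge-automorphism reduction you describe, using the automorphic action of $V^*$ introduced with the coherent states. The unitary implementation of a *-automorphism on $L^2(C^{**})$ and the identities $\varphi_{T,\beta}\circ\Phi_{-\beta}=\varphi_T$, $\varphi_{S,\alpha}\circ\Phi_{-\beta}=\varphi_{S,\alpha-\beta}$ are the implicit bookkeeping the paper leaves to the reader, so no genuinely different argument is involved.
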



\begin{thebibliography}{20}
\bibitem{Bo}
V.I.~Bogachev, 
\textit{Gaussian Measures}, Amer.~Math.~Soc., 1998. 
\bibitem{J}
S.~Janson, \textit{Gaussian Hilbert Spaces}, 
Cambridge University Press, 1997.
\bibitem{PW}
W.~Pusz and S.L.~Woronowicz, 
Functional calculus for sesquilinear froms and 
the purification map, 
\textit{Rep.~Math.~Phys.}, 
8(1975), 159--170. 
\bibitem{GMS}
S.~Yamagami, 
Geometric mean of states and transition amplitudes, 
Lett.~Math.~Phys., 84(2008), 123--137. % arXiv:0801.0858.  
\bibitem{GQFS}
\underline{\phantom{S.~Yamagami}}, 
Geometry of quasi-free states of CCR algebras, 
arXiv:0801.0858.  
\end{thebibliography}
\end{document}